\newtheorem{theorem}{Theorem}
\newtheorem{lemma}{Lemma}
\newtheorem{corollary}{Corollary}
\newtheorem{proof}{Proof}
\newlength{\figurewidth}
\newlength{\smallfigurewidth}
\newcommand{\rbt}{\mathcal B}
\newcommand{\runs}{\mathcal R}
\newcommand{\bigO}{\mathcal{O}}
\begin{document}

\title
{\large
	\textbf{Computing LZ77 in Run-Compressed Space}
}

\author{%
	Alberto Policriti$^{\dag\ast}$ and Nicola Prezza$^{\dag}$\\[0.5em]
	{\small\begin{minipage}{\linewidth}\begin{center}
				\begin{tabular}{ccc}
					$^{\dag}$University of Udine & \hspace*{0.5in} & $^{\ast}$Institute of Applied Genomics \\
					Via delle scienze, 206 && Via J.Linussio, 51 \\
					33100, Udine, Italy && 33100, Udine, Italy\\
					\url{alberto.policriti@uniud.it} &&  \\
					\url{prezza.nicola@spes.uniud.it} &&
				\end{tabular}
			\end{center}\end{minipage}}
		}
		
		\date{}
		
		\maketitle
		\thispagestyle{empty}
		
%

\begin{abstract}
In this paper, we show that the LZ77 factorization of a text $T\in\Sigma^n$ can be computed in $\bigO(\runs\log n)$ bits of working space and $\bigO(n\log\runs)$ time, $\runs$ being the number of runs in the Burrows-Wheeler transform of $T$ reversed. 
For extremely repetitive inputs, the working space can be as low as $\bigO(\log n)$ bits: \emph{exponentially} smaller than the text itself.
As a direct consequence of our result, we show  that a class of repetition-aware self-indexes based on a combination of run-length encoded BWT and LZ77 can be built in asymptotically optimal $\bigO\big(\runs+z)$ words of working space, $z$ being the size of the LZ77 parsing. 
\end{abstract}

\section{Introduction}

Being able to estimate the amount of self-repetitiveness of a text $T\in\Sigma^n$ is a task that stands at the basis of many efficient compression algorithms. While fixed-order statistical methods such as empirical entropy compression are able to exploit only short text regularities~\cite{gagie2006large}, techniques such as Lempel-Ziv parsing (LZ77)~\cite{ziv1977universal}, grammar compression~\cite{charikar2005smallest}, and run-length encoding of the Burrows-Wheeler transform~\cite{burrows1994block,siren2009run,siren2012compressed} have been shown superior in the task of compressing highly repetitive texts. Some recent works showed moreover that such efficient representations can be augmented without asymptotically increasing their space usage in order to support also fast search functionalities~\cite{belazzougui2015composite,claude2011self,kreft2011self} (repetition-aware self-indexes). One of the most remarkable properties of such indexes is the possibility of representing extremely repetitive texts in \emph{exponentially} less space than that of the text itself.

Among the above mentioned repetition-aware compression techniques, LZ77 has been shown to be superior to both grammar-compression~\cite{rytter2003application} and run-length encoding of the Burrows-Wheeler transform~\cite{belazzougui2015composite}. For this reason, much research is focusing into methods to efficiently build, access, and index LZ77-compressed text~\cite{belazzougui2014queries,kreft2011self}. A major concern while building LZ77-based self-indexes is to use small working space. This issue is particularly concerning in situations where the text to be parsed is extremely large and repetitive (e.g. consider the Wikipedia corpus or a large set of genomes belonging to the same species): in such cases, it is not always feasible to load the text into main memory in order to process it, even if the size of the final compressed representation could easily fit in RAM. In these domains, algorithms working in space $\Theta(n\log n)$~\cite{crochemore2008computing}, $\bigO(n\log|\Sigma|)$~\cite{ohlebusch2011lempel,starikovskaya2012computing}, or even $\bigO(nH_k)$~\cite{kreft2011self_thesis,policriti2015fastonline} bits are therefore of little use as they could be exponentially more memory-demanding than the final compressed representation. 

In this work, we focus on the measure of repetitiveness $\runs$: the number of equal-letter runs in the BWT of the (reversed) text. Several works~\cite{belazzougui2015composite,siren2009run,siren2012compressed} studied the empirical behavior of $\runs$ on highly repetitive text collections, suggesting that on such instances $\runs$ grows at the same rate as $z$. Let $\Sigma = \{s_1, ..., s_\sigma\}$ be the alphabet. Both $z$ and $\runs$ are at least $\sigma$ and can be $\Theta(\sigma)$, e.g. in the text $(s_1s_2...s_\sigma)^e$, $e>0$. However, the rate $\runs/z$ can be $\Theta(\log_\sigma n)$: this happens, for example, in de Bruijn sequences (of order $k>1$). 
In this paper, we show how to build the LZ77 parsing of a text $T$ in space bounded by the number $\runs$ of runs in the BWT of $T$ reversed. The main obstacle in computing the LZ77 parsing with a RLBWT index within \emph{repetition-aware space} is the suffix array (SA) sampling: by sampling the SA at regular text positions, this structure requires $\bigO((n/k)\log n)$ bits of working space and supports \emph{locate} queries in time proportional to $k$ (for any $0<k\leq n$). In this work we prove that---in order to compute the LZ77 parsing---it is sufficient to store at most two samples per BWT run, therefore reducing the sampling size to $\bigO(\runs \log n)$ bits. Our algorithm reads the text \emph{only once} from left to right in $\bigO(\log\runs)$ time per character (which makes it useful also in the streaming model). After reading the text, $\bigO(n\log\runs)$ additional time is required in order to output the LZ77 phrases in text-order (the parsing itself is not stored in main memory). The total space usage is $\bigO(\runs\log n)$ bits.

A consequence of our result is that a class of repetition-aware self-indexes combining LZ77 and RLBWT~\cite{belazzougui2015composite} can be built in asymptotically optimal $\bigO(z+\runs)$ words of working space. The only other known repetition-aware index that can be built in asymptotically optimal working space is based on grammar compression and is described in~\cite{yoshimasa2015online}.

\subsection{Notation}

We assume that the text $T\in\Sigma^n$, $\Sigma$ being the alphabet, is terminated by a character $\$\in\Sigma$ not appearing elsewhere in $T$. The LZ77 factorization of $T$ is defined as:
$$\mathcal Z=\langle pos_1,len_1,c_1 \rangle...\langle pos_i,len_i,c_i \rangle...\langle pos_z,len_z,c_z \rangle$$ 
where $0\leq pos_i, len_i <n$, $c_i\in\Sigma$ for $i=1,...,z$, and:
\begin{enumerate}
	\item $T = \omega_1c_1...\omega_zc_z$, with $\omega_i=\epsilon$ if $len_i=0$ and $\omega_i=T[pos_i,...,pos_i+len_i-1]$ otherwise. 
	\item For any $i=1,...,z$ with $len_i>0$, it follows that $pos_i < \sum_{j=1}^{i-1}(len_j+1)$.
	\item For any $i=1,...,z$, $\omega_i$ is the \emph{longest} prefix of $\omega_ic_i...\omega_zc_z$ that occurs at least twice in $\omega_1c_1...\omega_i$
\end{enumerate}

The notation $\overleftarrow S$ indicates the reverse of the string $S\in\Sigma^*$. All BWT intervals are inclusive, and we denote them as pairs $\langle l,r \rangle$ (left-right positions on the BWT). A equal-letter $a$-run in a string $S$ is a substring $W = a^e$, $e>0$ of $S$ such that either (i) $S = W$, (ii) $S = WbX$ or $S = XbW$, $b\in\Sigma,\ b\neq a,\ X\in\Sigma^*$, or (iii) $S = XbWcY$, $b,c\in\Sigma,\ b,c\neq a,\ X,Y\in\Sigma^*$.

A substring $V$ of a string $S\in\Sigma^*$ is \emph{left-maximal} if there exist two distinct characters $a\neq b,\ a,b\in\Sigma$ such that both $Va$ and $Vb$ are substrings of $S$.

\section{Algorithm}\label{sec: alg}

We now describe our algorithm, deferring a detailed description of the employed data structures to the next section.
Let $S = \#T$, $\#\notin\Sigma$ being a character lexicographically smaller than all characters in $\Sigma$. The main structure we use is a dynamic run-length encoded BWT (RLBWT) of the text $\overleftarrow S$. Note that we index strings of the form $\$W\#$, where $W\in (\Sigma-\{\$\})^*$ and $\$$ and $\#$ are the LZ77 and BWT terminators, respectively. The algorithm works in two phases. In the first phase, it reads $S$ from its first to last character, building a RLBWT representation of $\overleftarrow S$. This step employs a well-known online BWT construction algorithm which requires a dynamic string data structure to represent the BWT. The algorithm performs in total $|S|$ \emph{rank} and \emph{insert} operations on the dynamic string (see~\cite{policriti2015fastonline} for a formal description of the procedure). In our case, the dynamic string is also run-length compressed (see the next section for all details). 

In the second phase, the algorithm scans $S$ from left to right by using the BWT just built (i.e. by repeatedly applying LF mapping starting from character $\#$) and outputs the LZ77 factors. We enumerate text positions in $S$-order (despite the fact that we are indexing $\overleftarrow S$). Since characters in $S=\#T$ are right-shifted by 1 position with respect to $T$, we enumerate $S$-positions starting from $-1$, so that $T[i]=S[i],\ 0\leq i < n$ (this simplifies notation).

While reading the $j$-th character of $S$, $j\geq0$, we search in the index the current (reversed) LZ phrase prefix $\overleftarrow{S[i,...,j]},\ j\geq i \geq 0$  ($i$ being the first position of the current phrase). If none of the positions in the BWT interval associated with $\overleftarrow{S[i,...,j]}$ correspond to an occurrence of $S[i,...,j]$ in $S[0,...,j-1]$, then we output the LZ triple $\langle t, j-i, S[j]\rangle$, where $t<i$ and $S[t, ...,t + j-i-1] = S[i, ..., j-1]$ (if $i=j$, then $t=NULL$). The problems to solve are (i) determine whether or not the BWT interval contains occurrences of the phrase prefix in $S[0, ..., j-1]$, and---if the answer is negative and $j>i$---(ii) find $t<i$ such that $S[t, ...,t + j-i-1 ] = S[i, ..., j-1]$. We now show how to answer these queries in $\bigO(\runs \log n)$ bits of working space by keeping in memory $\sigma$ dynamic sets containing in total $\bigO(\runs)$ SA samples.


\subsection{Suffix Array Sampling}

From now on, we will write \emph{BWT} to indicate the Burrows-Wheeler transform of the \emph{whole} string $\overleftarrow S$. Note that, even though we say we \emph{sample the suffix array}, we actually sample text positions associated with BWT positions (i.e. we sample $S$-positions on the $L$ column instead of $S$-positions on the $F$ column of the BWT matrix). Moreover, since we enumerate positions in $S$-order (not $\overleftarrow S$-order), BWT position $k$ will be associated with the sample $n-SA[k]$, $SA[k]$ being the $k$-th entry in the suffix array of $\overleftarrow S$.

Let $0\leq j < n$ be a $S$-position, and $0\leq k < n+1$ be its corresponding position on the BWT. We store SA samples as pairs $\langle j,k\rangle$. Each pair is of one of three types: \emph{singleton}, denoted as $\langle j, k\rangle^\circ$, \emph{open}, denoted as $\prescript{[}{}{\langle j,k\rangle}$, and \emph{close}, denoted as $\langle j,k\rangle^]$. If the pair type is not relevant for the discussion, we simply write $\langle j,k\rangle$. 

Let $\Sigma = \{s_1, ..., s_\sigma\}$ be the alphabet. The samples are stored in $\sigma$ red-black trees $\rbt_{s_1}, ..., \rbt_{s_\sigma}$, and are ordered by BWT coordinate (i.e. the second component of the pairs). While reading character $a=S[j]=BWT[k]$, we first locate the inclusive bounds $l\leq k \leq r$ of its associated BWT $a$-run. We update the trees according to the following rules:

\begin{enumerate}
	\item If no pair $\langle j',k'\rangle\in \rbt_a$ is such that $l\leq k' \leq r$, then we insert the singleton $\langle j, k\rangle^\circ$ in $\rbt_a$. \label{rule1}
	\item If there exists a singleton pair $\langle j',k'\rangle^\circ\in \rbt_a$ such that $l\leq k' \leq r$, then we remove it and:
	\begin{enumerate}
		\item If $k<k'$, then we insert in $\rbt_a$ the pairs $\prescript{[}{}{\langle j,k\rangle}$ and $\langle j',k'\rangle^]$
		\item If $k'<k$, then we insert in $\rbt_a$ the pairs $\prescript{[}{}{\langle j',k'\rangle}$ and $\langle j,k\rangle^]$
	\end{enumerate} \label{rule2}
	\item If there exist two pairs $\prescript{[}{}{\langle j',k'\rangle}, \langle j'',k''\rangle^]\in\rbt_a$ such that $l\leq k' < k'' \leq r$:
	\begin{enumerate}
		\item If $k<k'$, then we remove $\prescript{[}{}{\langle j',k'\rangle}$ from $\rbt_a$ and insert $\prescript{[}{}{\langle j,k\rangle}$ in $\rbt_a$
		\item If $k>k''$, then we remove $\langle j'',k''\rangle^]$ from $\rbt_a$ and insert $\langle j,k\rangle^]$ in $\rbt_a$
		\item Otherwise ($k'<k<k''$), we leave the trees unchanged. 
	\end{enumerate} \label{rule3}
\end{enumerate}

We say that a BWT $a$-run $BWT[l,...,r]$ \emph{contains a pair} or, equivalently, \emph{contains a SA sample} if there exists some $\langle j,k\rangle \in\rbt_a$ such that $l\leq k\leq r$. Moreover, we say that BWT-position $k$ is \emph{marked with a SA sample} if $\langle j,k\rangle\in \rbt_a$, where $a = S[j] = BWT[k]$ ($j$ is the $S$-position corresponding to BWT-position $k$).

It is easy to see that the following invariants hold after the application of any of the above three rules: (i) each BWT run contains either no pairs, a singleton pair, or two pairs---one open and one close; (ii) If a BWT run contains an open $\prescript{[}{}{\langle j',k'\rangle}$ and a close $\langle j'',k''\rangle^]$ pair, then $k'<k''$; (iii) once we add a SA sample inside a BWT run, that run will---from that moment on---always contain at least one SA sample.

By saying that we have \emph{processed $S$-positions} $0,...,j$, we mean that---starting with all trees empty---we have applied the update rules to the SA samples $\langle 0,0 \rangle$, $\langle 1,BWT.LF(0) \rangle$, $\langle 2,BWT.LF^2(0) \rangle, ..., \langle j,BWT.LF^j(0) \rangle$, where $BWT.LF^i(0)$ denotes the LF function applied $i$ times to the BWT-position 0 (e.g. $BWT.LF^2(0) = BWT.LF(BWT.LF(0))$).

We now prove that, after processing $S$-positions $0,...,j$, we can locate at least one occurrence of any string that occurs in $S[0,...,j]$. This property will allow us to locate LZ phrase boundaries and previous occurrences of LZ phrases.

Suppose we have processed $S$-positions $0,...,j$, and let $[l,r]$ be the BWT interval associated with a \emph{left-maximal} string $\overleftarrow V \in \Sigma^*$. The following holds:

\begin{lemma}\label{lemma1}
	There exists a pair $\langle j',k' \rangle\in \rbt_a$ such that $l\leq k' \leq r$ if and only if $Va$ occurs in $S[0,...,j]$.
\end{lemma}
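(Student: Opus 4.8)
The plan is to prove both directions of the equivalence, with the ``if'' direction being the substantive one. For the ``only if'' direction: if some pair $\langle j',k'\rangle\in\rbt_a$ has $l\le k'\le r$, then by the meaning of SA samples, $k'$ is a BWT-position in the interval of $\overleftarrow V$ whose corresponding $S$-position $j'$ was processed, i.e. $j'\le j$. Since $k'$ lies in $[l,r]$, the character $BWT[k']=a$ is preceded (in $S$) by an occurrence of $V$, equivalently $Va$ (reading in $S$-direction) occurs ending at an $S$-position $\le j$, so $Va$ occurs in $S[0,\dots,j]$. I would spell this out carefully using the convention that BWT-position $k$ corresponds to $S$-position $n-SA[k]$ and that $a=BWT[k']$ sits at the $L$-column.

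For the ``if'' direction, assume $Va$ occurs in $S[0,\dots,j]$; I must show some pair sits in $[l,r]$. The argument is an induction on the number of processed positions, reasoning about how the three update rules maintain coverage of runs intersecting $[l,r]$. The key structural fact I would isolate first: because $\overleftarrow V$ is left-maximal, the interval $[l,r]$ of $\overleftarrow V$ is a union of maximal BWT $a$-runs together with runs of other characters — but the sub-interval of $[l,r]$ where $BWT=a$ is itself a (contiguous) union of full BWT $a$-runs, since a run cannot straddle the boundary of a character-homogeneous block. So it suffices to show: if $Va$ occurs in $S[0,\dots,j]$, then at least one of the BWT $a$-runs contained in $[l,r]$ contains a SA sample. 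By invariant (iii) from the excerpt, once a run acquires a sample it keeps one, so I only need that \emph{some} processing step $j'\le j$ inserted a sample into one of these runs. When $Va$ first occurs in $S[0,\dots,j']$ for the minimal such $j'$, the position $j'$ being processed corresponds to a BWT-position $k'$ with $BWT[k']=a$ and $k'\in[l',r']$ for one of these $a$-runs $[l',r']\subseteq[l,r]$; at that step, Rule~1, Rule~2, or Rule~3 fires, and in every case the run $[l',r']$ ends up containing a sample (Rule~1 inserts a singleton; Rule~2 replaces a singleton by an open/close pair, still inside the run; Rule~3 leaves an already-present pair in place or shifts an endpoint sample within the run). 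Hence $[l,r]$ contains a pair after step $j'$, and by invariant (iii) still after step $j$.

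The main obstacle is the bookkeeping around left-maximality and the run/interval geometry — specifically, verifying that the $a$-portion of a left-maximal interval decomposes exactly into complete BWT $a$-runs (so that ``$Va$ occurs'' is equivalent to ``some $a$-run inside $[l,r]$ is nonempty and was reached''), and checking that none of the three update rules can ever \emph{remove} the last sample from a run that intersects $[l,r]$. Rule~3 needs the most care: it can delete an open or close pair, but only while simultaneously inserting a replacement open/close pair, and one must confirm the replacement lands in the same run (it does, since $k<k'$ with $k',k''$ in the same run forces $k$ into that run by the run being maximal and $[l',r']$ being the run containing $k'$ — actually $k$ could be in a different run, so the precise claim to check is that the run losing its sample still has its \emph{other} endpoint sample, which it does, because Rule~3 only touches one of the two pair-endpoints of a given run). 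I would state invariants (i)--(iii) as a preliminary lemma (they are asserted in the excerpt) and then run the induction cleanly.
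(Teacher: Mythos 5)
Your ``only if'' direction matches the paper's and is fine, but the ``if'' direction rests on a structural claim that is false, and the false claim hides exactly the case the paper's proof is really about. You assert that, because $\overleftarrow V$ is left-maximal, the $a$-positions of $[l,r]$ decompose into \emph{complete} maximal BWT $a$-runs, i.e.\ that no run straddles the boundary of $[l,r]$. Left-maximality only guarantees that $BWT[l,\dots,r]$ contains at least two distinct characters; it says nothing about the characters at positions $l-1$ and $r+1$, so a maximal $a$-run of the BWT can perfectly well begin before $l$ or end after $r$ while intersecting $[l,r]$. Once a run straddles the boundary, your conclusion ``the run containing the visited position has a sample, hence $[l,r]$ contains a sample'' breaks: the run carries at most two samples, sitting at the leftmost and rightmost \emph{visited} positions of the run, and these can both lie outside $[l,r]$ before the occurrence of $Va$ is processed. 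Your closing remark about Rule~3 (``the run losing its sample still has its other endpoint sample'') addresses persistence of \emph{some} sample in the run, which is invariant~(iii) and not the issue; the issue is whether a sample lies \emph{inside} $[l,r]$.

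The paper's proof splits on exactly this: if the run containing the visited position $k$ is interior to $[l,r]$, your argument works (this is its case~(1)); otherwise it uses left-maximality for a different purpose than you do, namely to conclude that $BWT[l,\dots,r]$ is not a single $a$-block, so a straddling run can overhang only \emph{one} side of $[l,r]$ and its intersection with $[l,r]$ is a prefix $a^e$ (or, symmetrically, a suffix) of $BWT[l,\dots,r]$ whose outer end coincides with the run's end. Then the argument is positional: the close sample always marks the rightmost visited position of its run, which after visiting $k$ is at least $k\geq l$ and at most the run's right end $l+e-1\leq r$, hence inside $[l,r]$ (symmetrically, the open sample for a run overhanging on the right). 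To repair your proof you would need to replace the decomposition claim with this observation that the open/close pairs track the extreme visited positions, plus the remark that a run covering all of $[l,r]$ is excluded by left-maximality; as written, the induction does not go through.
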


\begin{proof}
$(\Rightarrow)$ If such a pair $\langle j',k' \rangle\in \rbt_a$ exists, where $l\leq k' \leq r$, then clearly $S[j'- m, ..., j'] = Va$. Moreover, since we processed only $S$-positions $0,...,j$, it holds that $j'\leq j$, therefore $Va$ occurs in $S[0,...,j]$.

$(\Leftarrow)$
Let $S[t,...,t+m] = Va$, with $t\leq j-m$. One of the following 2 cases can happen:

(1) The BWT $a$-run containing character $S[t+m] = a$ is a substring of $BWT[l, ..., r]$ \emph{and} is neither a prefix nor a suffix of $BWT[l, ..., r] = Xca^edY$, for some $X,Y\in\Sigma^*,\ c,d\neq a, e>0$. Then---for invariant (iii) and rule \ref{rule1}---since we have visited it (while processing $S$-position $t+m$), the $a$-run must contain at least one SA sample.

(2) The BWT $a$-run containing character $S[t+m] = a$ spans at least two BWT intervals ($\langle l,r \rangle$ included) or is a suffix/prefix of $BWT[l,...,r]$. 
Since  $V$ is left-maximal in $S$, then $BWT[l,...,r]$ contains also a character $b\neq a$. We therefore have that either (i) $BWT[l, ..., r] = a^eXbY$, or (ii) $BWT[l, ..., r] = YbXa^e$, where $X,Y\in\Sigma^*,\ e>0$. The two cases are symmetric, so here we prove only (i). 

Consider all $\overleftarrow S$-suffixes $\overleftarrow{S[0,...,j'']}$ such that 

\begin{itemize}
	\item $j''\leq j$
	\item $Va$ is a suffix of $S[0,...,j'']$
	\item The lexicographic rank of $\overleftarrow{S[0,...,j''-1]}$ among all $\overleftarrow S$-suffixes is $l \leq k'' \leq l+e-1$ (i.e. the suffix lies in $BWT[l,..., l+e-1] = a^e$).
\end{itemize}

There exists at least one such $\overleftarrow S$-suffix: $\overleftarrow{S[0,...,t+m]}$. Then, it is easy to see that the rank $k'$ of the lexicographically largest\footnote{To prove the symmetric case (ii), use the lexicographically \emph{smallest} suffix of this kind.}  $\overleftarrow S$-suffix with the above properties is such that $\langle j',k'\rangle\in\rbt_a$ for some $j'\leq j$. This is implied by the three update rules described above. The BWT position $k$ corresponding to $S$-position $t+m$ lies in the BWT interval $[l, l+e-1]$, therefore either (i) $k$ is the rightmost position visited in its run (thus it is marked with a SA sample), or (ii) the rightmost visited position $k'>k$ in $[l, l+e-1]$ is marked with a SA sample (note that \emph{lexicographically largest} translates to \emph{rightmost} on BWT intervals).
\end{proof}

As a corollary, we note that we can drop the left-maximality requirement from Lemma \ref{lemma1}. Suppose we have processed $S$-positions $0,...,j-1$ (none if $j=0$). The following holds:

\begin{corollary}\label{corollary}
	After processing $S$-positions $j,...,j+m-1$, $m>0$, if a string $W\in\Sigma^m$ occurs in $S[0,...,j+m-1]$, then we can locate one of such occurrences.
\end{corollary}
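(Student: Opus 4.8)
The plan is to argue by induction on the pattern length $m$, reducing to Lemma~\ref{lemma1} whenever possible and using a determinism argument when it does not apply directly. Throughout, I would first record one monotonicity observation that is immediate from invariant~(iii) and the update rules~\ref{rule1}--\ref{rule3}: once a BWT run contains a SA sample it contains one forever, and the open/close (or singleton) pairs of a run always mark the leftmost and rightmost positions of that run that have been visited so far. In particular, if the conclusion of the corollary holds after processing $S$-positions $0,\dots,N$, it still holds after processing $0,\dots,N'$ for every $N'\ge N$; this lets me invoke the inductive hypothesis freely even though the recursive call will have seen strictly more processed positions than it formally requires.

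Write $W=Va$ with $a=W[m-1]$ and $|V|=m-1$, and let $[l,r]$ be the BWT interval of $\overleftarrow V$, obtained by backward search; recall that the characters occurring in $BWT[l,\dots,r]$ are exactly the characters $c$ for which $Vc$ occurs in $S$. Two cases arise. If $V$ is left-maximal in $S$ (equivalently, $BWT[l,\dots,r]$ contains at least two distinct characters), I would apply Lemma~\ref{lemma1}: since $W=Va$ occurs in $S[0,\dots,j+m-1]$ and exactly $S$-positions $0,\dots,j+m-1$ have been processed, there is a pair $\langle j',k'\rangle\in\rbt_a$ with $l\le k'\le r$, which a red-black-tree search in $\rbt_a$ returns; as in the $(\Rightarrow)$ direction of Lemma~\ref{lemma1}, such a pair certifies $S[j'-(m-1),\dots,j']=Va=W$ with $j'\le j+m-1$, an occurrence of $W$ inside $S[0,\dots,j+m-1]$. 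This also covers the base case $m=1$, taking $V=\epsilon$ (left-maximal, since $S$ contains at least the two distinct symbols $\#$ and $\$$) and $[l,r]=[0,n]$.

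If instead $V$ is not left-maximal, then $V$ admits at most one right extension in $S$; since $W=Va$ occurs in $S$, that extension is $a$, so every occurrence of $V$ in $S$ that is not a suffix of $S$ is immediately followed by $a$. As $W$ occurs in $S[0,\dots,j+m-1]$, its prefix $V\in\Sigma^{m-1}$ occurs in $S[0,\dots,j+m-2]$; by the inductive hypothesis (together with the monotonicity observation) I can locate an occurrence $S[p,\dots,p+m-2]=V$ with $p+m-2\le j+m-2\le n-1$. Then position $p+m-1\le n$ exists in $S$ and the located occurrence of $V$ is not a suffix of $S$, so $S[p+m-1]=a$ and hence $S[p,\dots,p+m-1]=Va=W$ is an occurrence of $W$ within $S[0,\dots,j+m-1]$. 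The recursion terminates because $|V|$ strictly decreases and bottoms out at the base case above.

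The main obstacle I anticipate is the non-left-maximal case: Lemma~\ref{lemma1} says nothing there, so the proof must instead exploit that the missing character is forced and recurse on a shorter pattern; making this watertight requires the index-range bookkeeping above (so that the located shorter occurrence genuinely admits the forced extension and still lies inside the allowed window) and the monotonicity remark (so that the inductive hypothesis applies despite the extra processed positions).
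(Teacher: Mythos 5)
Your proof is correct and follows essentially the same route as the paper's: induction on $m$ with $W=Va$, invoking Lemma~\ref{lemma1} when the BWT interval of $\overleftarrow V$ contains two distinct characters (which also handles $m=1$), and otherwise recursing on $V$ and using that its right extension is forced to be $a$. The monotonicity remark and the index-range bookkeeping you add are sound but only make explicit details the paper leaves implicit.
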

\begin{proof}
	We prove the property by induction on $|W|=m>0$. Let $W = Va,\ V\in\Sigma^*,\ a\in\Sigma$. If $m=1$, then $V=\epsilon$ (empty string), and the BWT interval associated with $\overleftarrow V$ is the full interval $\langle 0,n \rangle$. But then, $BWT[0,...,n]$ contains at least 2 distinct characters ($a$ and $\#$), so we can apply Lemma \ref{lemma1} to find a previous occurrence of $W=a$.
	
	If $m>1$, then $|V|>0$ and two cases can occur. If the BWT interval associated with $\overleftarrow V$ contains at least 2 distinct characters ($a$ included since by hypothesis $Va$ occurs in $S[0,..., j+m-1]$), then we can apply Lemma \ref{lemma1} to find an occurrence of $W=Va$ in $S[0,..., j+m-1]$. If, on the other hand, the BWT interval associated with $\overleftarrow V$ contains only one distinct character, then this character must be $a$ since $Va$ occurs in $S$. By inductive hypothesis we can locate an occurrence $occ$ of $V$ in $S[0,..., j+m-2]$. But then, since all occurrences of $V$ in $S$ are followed by $a$, $occ$ is also an occurrence of $W=Va$ in $S[0,...,j+m-1]$.
\end{proof}	

%
%
%

\subsection{Pseudocode}

Our complete procedure is reported as Algorithm 1. In line 1 we build the RLBWT of $\overleftarrow S = \overleftarrow{\#T}$ using the online algorithm mentioned at the beginning of this section and employing a dynamic run-length encoded string data structure to represent the BWT. This is the only step that uses the input text, which is read only once from left to right. Since the dynamic string we use is run-length compressed, this step requires $\bigO(\runs\log n)$ bits of working space.

From lines 2 to 9 we initialize all variables that will be used during the algorithm execution. In order: the text length $n$, the current position $j$ on $T$, the position $k$ on $RLBWT$ corresponding to position $j$ on $T$, the current LZ77 phrase prefix length $len$ (last character $T[j]$ excluded), the $T$-position $occ<j$ at which the current phrase prefix $T[j-len,...,j-1]$ occurs ($occ=NULL$ if $len=0$), the red-black trees $\rbt_{s_1}, ..., \rbt_{s_\sigma}$ used to store the SA samples, the current character $c=T[j]=RLBWT[k]$ on the text, and the (inclusive) interval $\langle l,r\rangle$ corresponding to the current reversed LZ phrase prefix $\overleftarrow{T[j-len,...,j-1]}$ on $RLBWT$ ($\langle l,r\rangle$ is the full interval $\langle 0,n\rangle$ if $len=0$). 

The \emph{while} cycle in line 10 scans $T$ positions from the first to last. First of all, we have to discover if the current character $T[j]=c$ ends a LZ phrase. In line 11 we count the number $u$ of runs that intersect interval $[ l,r ]$ on $RLBWT$. 
If $u=1$, then the current phrase prefix $T[j-len,...,j-1]$ is always followed by $c$ in $T$, and consequently $T[j]$ cannot be the last character of the current LZ phrase. Otherwise, by Lemma \ref{lemma1} $T[j-len,...,j]$ occurs in $T[0,...,j-1]$ if and only if there exists a SA sample $\langle j',k'\rangle \in \rbt_c$ such that $l\leq k' \leq r$. The existence of such pair can be verified with a binary search on the red-black tree $\rbt_c$. In line 12 we perform these two tests. If at least one of these two conditions holds, then $T[j-len,...,j]$ occurs in $T[0,...,j-1]$ and therefore is not a LZ phrase. If this is the case, we now have to find $occ<(j-len)$ such that $T[occ, ..., occ+len] = T[j-len,...,j]$ (i.e. a previous occurrence of the current LZ phrase prefix). This goal can be achieved by applying the inductive proof of Corollary \ref{corollary}. If $u=1$ then $occ$ is already the value we need. Otherwise (Lines 13-14) we find a SA sample $\langle j',k'\rangle \in \rbt_c$ such that $l\leq k' \leq r$ (it must exist since $u>1$ and the condition in Line 12 succeeded). Procedure $\rbt_c.locate(l,r)$ returns such $j'$ (to make the procedure deterministic, one could return the value $j'$ associated with the smallest BWT position $l\leq k' \leq r$). Then, we assign to $occ$ the value $j'-len$ (Line 14). We can now increase the current LZ phrase prefix length (Line 15) and update the BWT interval $\langle l,r\rangle$ so that it corresponds to the string $\overleftarrow{T[j-len+1,...,j]}$ (LF mapping in Line 16).

If both the conditions at line 12 fail, then the string $T[j-len,...,j]$ does not occur in $T[0,...,j-1]$ and therefore is a LZ phrase. By the inductive hypothesis of Corollary \ref{corollary}, $occ<j-len$ is either $NULL$---if $len=0$---or such that $T[occ,...,occ+len-1] = T[j-len, ..., j-1]$ otherwise. At line 18 we can therefore output the LZ factor. We now have to open (and start searching in RLBWT) a new LZ phrase: at lines 19-21 we reset the current phrase prefix length, set $occ$ to $NULL$, and reset the interval associated to the current (reversed) phrase prefix to the full interval.

All we have left to do now is to process position $j$ (i.e. apply the update rules to the SA sample $\langle j,k\rangle$) and proceed to the next text position. At line 22 we locate the (inclusive) borders $\langle l_{run}, r_{run} \rangle$ of the BWT run containing position $k$ (i.e. $l_{run} \leq k \leq r_{run}$). This information is used at line 23 to apply the update rules on $\rbt_c$ and on the SA sample $\langle j,k \rangle$. Finally, we increase the current $T$-position $j$ (line 24), compute the corresponding position $k$ on RLBWT (line 25), and read the next $T$-character $c$ on the RLBWT.

\begin{center}
	\begin{figure}
		\includegraphics[trim={2.5cm 10cm 2cm 3cm},clip]{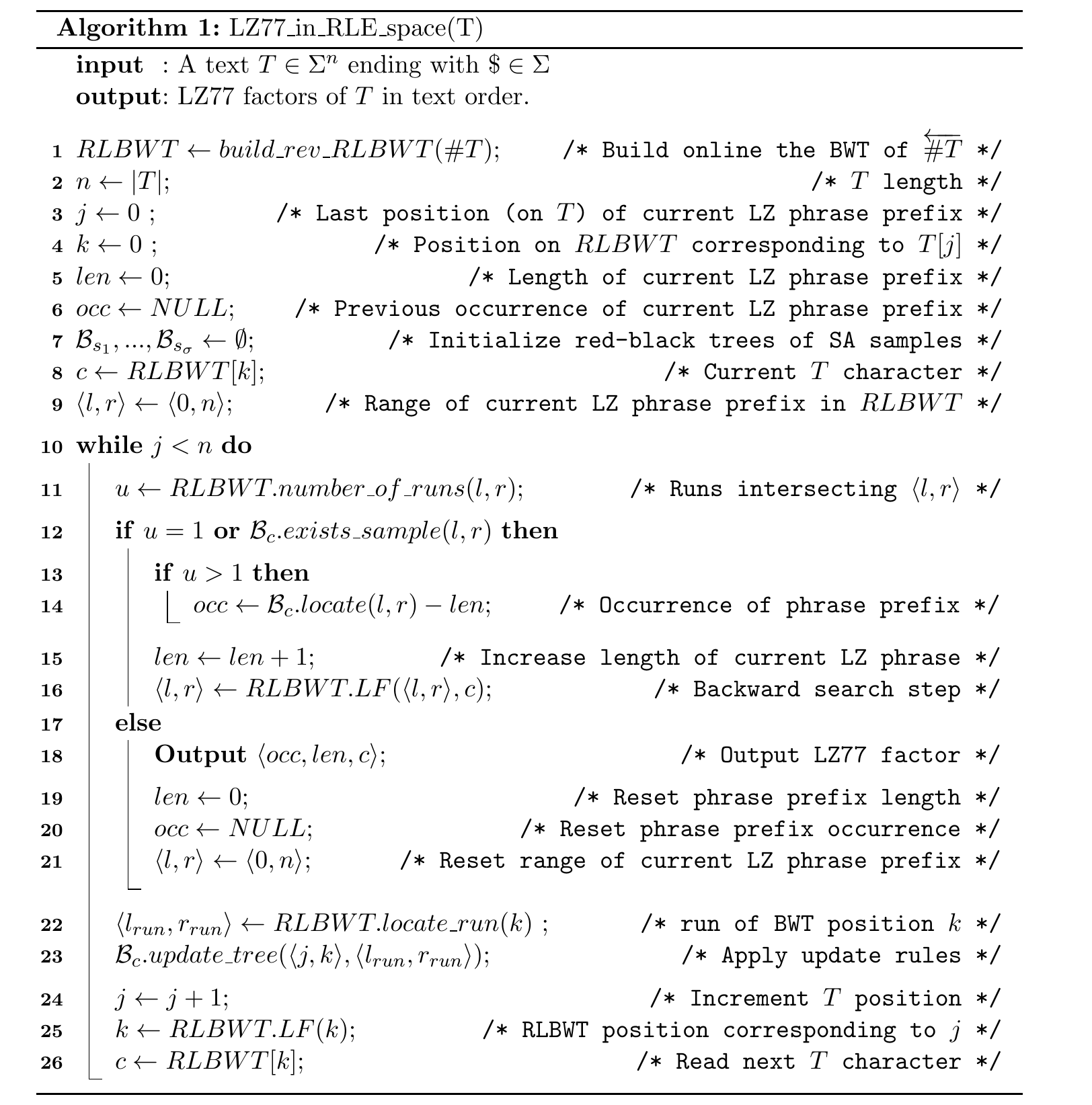}
	\end{figure}
\end{center}

\section{Structures}

To implement the RLBWT data structure, we adopt the approach of~\cite{siren2009run} (RLFM+ index): we store one character per run in a string $H\in (\Sigma\cup\{\#\})^\runs$, we mark the beginning of the runs with a bitvector $V_{all}[0,...,n]$, and we store the lengths of $c$-runs in $\sigma$ bitvectors $V_c$, $c\in\Sigma$ (i.e. length $m$ is encoded as $10^{m-1}$)\footnote{For example, let $BWT=bc\#bbbbccccbaaaaaaaaaaa$. Then, $H=bc\#bcba$, $V_{all}=11110001000110000000000$, $V_a=10000000000$, $V_b = 110001$, and $V_c = 11000$ ($V_\#$ is always $1$).}. In this way, \emph{rank-select-access} operations on the BWT are reduced to \emph{rank-select-access} operations on $H$ and on the bitvectors. If the bitvectors are gap-encoded, the structure takes $\bigO(\runs \log n)$ bits of space. 

The above representation can support also character insertions: an insertion in the run-length string can be easily implemented with a constant number of insertions in $H$ and insertions and 0-deletions---i.e. deleting a 0-bit---in the bitvectors. For space constraints, we do not give these details here.

In order to support insertions, all structures must be dynamic. For $H$ we can use the structure of~\cite{navarro2014optimal} ($\bigO(\runs\log n)$ bits of space and $\bigO(\log\runs)$-time rank, select, access, and insert). We encode the bitvectors with partial sums data structures. Since we want these structures to be dynamic, we are interested in the \emph{Searchable Partial Sums with Indels} (SPSI) problem. This problem consists in maintaining a sequence $s_1, ..., s_m$ of nonnegative integers, each one of $k\in\bigO(w)$ bits, $w$ being the size of a memory word. We want to support the following operations in $\bigO(\log m)$ time with a structure $PS$ of $\bigO(m\log M)$ bits, where $M=\sum_{i=1}^ms_i$:
\begin{itemize}
	\item $PS.sum(i) = \sum_{j=1}^{i} s_j$
	\item $PS.search(x)$ is the smallest $i$ such that $\sum_{j=0}^{i} \geq x$
	\item $PS.update(i,\delta)$: update $s_i$ to $s_i+\delta$. $\delta$ can be negative as long as $s_i+\delta\geq 0$.
	\item $PS.insert(i)$: insert $0$ between $s_{i-1}$ and $s_i$ (if $i=0$, $0$ becomes the first element).
\end{itemize}

We do not need \emph{delete} since we update the bitvectors only with insertions. Once having such a structure, a length-$n$ bitvector $B = 10^{s_1-1}10^{s_2-1}...10^{s_m-1}$ ($s_i>0$) can be encoded in $\bigO(m\log n)$ bits of space with a partial sum $PS$ on the sequence $s_1, ..., s_m$. We need to answer the following queries on $B$: $B[i]$ (\emph{access}), $B.rank(i) = \sum_{j=0}^iB[j]$, $B.select(i)$ (the position $j$ such that $B[j]=1$ and $B.rank(j)=i$), $B.insert(i,b)$ (insert bit $b\in\{0,1\}$ between positions $i-1$ and $i$), and $B.delete_0(i)$, where $B[i]=0$ (delete $B[i]$).

It is easy to see that \emph{rank/access} and \emph{select} operations on $B$ reduce to \emph{search} and \emph{sum} operations on $PS$, respectively. $B.delete_0(i)$ requires just a search and an update on $PS$. To support \emph{insert} on $B$, we can operate as follows. 
$B.insert(i,0),\ i>0$, is implemented with $PS.update( PS.search(i), 1 )$. 
$B.insert(0,1)$ is implemented with $PS.insert(0)$ followed by $PS.update(0,1)$.
$B.insert(i,1),\ i>0$, ``splits'' an integer into two integers: let $j = PS.search(i)$  and $\delta = PS.sum(j)-i$. We first decrease $s_j$ with $PS.update(j,-\delta)$. Then, we insert a new integer $\delta+1$ with the operations $PS.insert(j+1)$ and $PS.update(j+1,\delta+1)$.


\subsection{SPSI implementation}

In our case, the bit length of each integer in the partial sums structures is $k=\log n$, $n$ being the text length. The total number of integers to be stored among the $\sigma+1$ partial sums is $2\runs$. Since we aim at obtaining $\bigO(\runs\log n)$ bits of space, the problem can be solved by using red-black trees (see~\cite{gonzalez2009rank} for a similar construction).

Let $s_1, ..., s_m$ be a sequence of nonnegative integers. We store $s_1, ..., s_m$ in the leaves of a red-black tree, and we store in each internal node of the tree the number of nodes and partial sum of its subtrees. \emph{Sum} and \emph{search} queries can then be easily implemented with a traversal of the tree from the root to the target leaf. \emph{Update} queries require finding the integer (leaf) of interest and then updating $\bigO(\log m)$ partial sums while climbing the tree from the leaf to the root. Finally, \emph{insert} queries require finding an integer (leaf) $s_i$ immediately preceding or following the insert position, substituting it with an internal node with two children leaves $s_i$ and $0$ (the order depending on the insert position---before or after $s_i$), adding 1 to $\bigO(\log m)$ subtree-size counters while climbing the tree up to the root, and applying the RBT update rules. This last step requires the modification of $\bigO(1)$ counters (subtree-size/partial sum) if RBT rotations are involved. All operations take $\bigO(\log m)$ time.

\subsection{Analysis}

It is easy to see that \emph{rank, select, access}, and \emph{insert} operations on RLBWT take $\bigO(\log\runs)$ time each. 
Operations $\rbt_c.exists\_sample(l,r)$ (line 12) and $\rbt_c.locate(l,r)$ (Line 14) require just a binary search on the red-black tree of interest and can also be implemented in  $\bigO(\log\runs)$ time. $RLBWT.number\_of\_runs(l,r)$ is the number of bits set in $V_{all}[l,...,r]$, plus 1 if $V_{all}[l]=0$: this operation requires therefore $\bigO(1)$ rank/access operations on $V_{all}$ ($\bigO(\log\runs)$ time). Similarly, $RLBWT.locate\_run(k)$ requires finding the two bits set preceding and following position $k$ in $V_{all}$ ($\bigO(\log\runs)$ time with a constant number of rank and select operations). We obtain:

\begin{theorem}\label{thm1}
	Algorithm 1 computes the LZ77 factorization of a text $T\in\Sigma^n$ in $\bigO(\runs\log n)$ bits of working space and $\bigO(n\log\runs)$ time, $\runs$ being the number of runs in the Burrows-Wheeler transform of $T$ reversed.
\end{theorem}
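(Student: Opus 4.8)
The plan is to verify that Algorithm 1 is correct and then account for its time and space. Correctness splits into two claims, both essentially already established. First, the algorithm correctly detects LZ phrase boundaries: while processing $S$-position $j$ with current reversed phrase prefix $\overleftarrow{T[j-len,\dots,j-1]}$ occupying BWT interval $\langle l,r\rangle$, the character $T[j]=c$ extends the current phrase (i.e.\ $T[j-len,\dots,j]$ occurs in $T[0,\dots,j-1]$) if and only if either $u=1$ (the interval lies in a single run, so every occurrence of the prefix is followed by $c$) or $\rbt_c$ contains a sample inside $[l,r]$. When $u>1$, the prefix $\overleftarrow{T[j-len,\dots,j-1]}$ is left-maximal (its interval contains $\geq 2$ distinct characters), so this is exactly Lemma~\ref{lemma1}; the $u=1$ case is the trivial observation handled in the proof of Corollary~\ref{corollary}. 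Second, when the phrase does extend, the algorithm must supply a valid previous occurrence $occ<j-len$ of the new prefix: this is precisely the content of Corollary~\ref{corollary}, and the pseudocode simply implements that corollary's inductive construction — in the $u=1$ branch the old $occ$ already works (all occurrences of the old prefix are followed by $c$), and in the $u>1$ branch $\rbt_c.locate(l,r)$ returns a sample $\langle j',k'\rangle$ with $l\leq k'\leq r$, whence $occ:=j'-len$. A short induction on phrase prefix length, piggy-backing on the invariants (i)–(iii) and on the fact that after processing $S$-positions $0,\dots,j$ the sample structures answer locate queries for everything in $S[0,\dots,j]$, ties this together and shows the emitted triples $\langle occ, len, c\rangle$ form exactly the LZ77 factorization as defined in the Notation section.

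Next I would bound the working space. The RLBWT of $\overleftarrow S$ is stored via the run-head string $H\in(\Sigma\cup\{\#\})^\runs$ in the dynamic structure of~\cite{navarro2014optimal} ($\bigO(\runs\log n)$ bits) and $\sigma+1$ gap-encoded bitvectors realized as SPSI/red-black-tree partial sums holding $2\runs$ integers of $\log n$ bits total, again $\bigO(\runs\log n)$ bits. For the SA samples, the key point is that by invariant (i) each BWT run stores at most two pairs, each pair being $\langle j,k\rangle$ with $j,k<n+1$, so the $\sigma$ red-black trees $\rbt_{s_1},\dots,\rbt_{s_\sigma}$ hold $\bigO(\runs)$ pairs of $\bigO(\log n)$ bits, i.e.\ $\bigO(\runs\log n)$ bits. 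All scalar variables are $\bigO(\log n)$ bits. Crucially, the parsing itself is streamed out and never stored, and the input text is read once and not retained. Hence total working space is $\bigO(\runs\log n)$ bits.

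Finally the running time. Phase one performs $|S|=n+1$ \emph{rank}/\emph{insert} operations on the dynamic RLBWT during the online BWT construction; each such operation reduces to $\bigO(1)$ rank/select/access/insert calls on $H$ and on the partial-sum-encoded bitvectors, each costing $\bigO(\log\runs)$ time (for $H$) or $\bigO(\log\runs)$ time (the partial sums store $\bigO(\runs)$ integers), so phase one is $\bigO(n\log\runs)$. Phase two runs the \emph{while} loop $n$ times; each iteration does a constant number of: LF steps, $number\_of\_runs$ and $locate\_run$ queries on $V_{all}$ (each $\bigO(1)$ rank/select on a structure with $\bigO(\runs)$ ones, hence $\bigO(\log\runs)$), $exists\_sample$/$locate$ binary searches on a $\rbt_c$ of size $\bigO(\runs)$ ($\bigO(\log\runs)$), and $\bigO(1)$ red-black-tree updates from the sample update rules ($\bigO(\log\runs)$ each). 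So phase two is also $\bigO(n\log\runs)$, giving $\bigO(n\log\runs)$ overall. I do not expect a serious obstacle here: the real content was already extracted into Lemma~\ref{lemma1} and Corollary~\ref{corollary}, so the main thing to be careful about is (a) matching the informal pseudocode walk-through to a clean induction establishing that the output equals the LZ77 parsing per the formal definition, in particular checking the longest-prefix condition~3 — which follows because the loop only closes a phrase when the interval's sample test fails, meaning the prefix genuinely has no earlier occurrence — and (b) justifying that every listed operation is supported in $\bigO(\log\runs)$, rather than $\bigO(\log n)$, time by each substructure holding only $\bigO(\runs)$ elements.
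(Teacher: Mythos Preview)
Your proposal is correct and follows essentially the same approach as the paper: correctness is deferred to Lemma~\ref{lemma1} and Corollary~\ref{corollary} (with the $u=1$ vs.\ $u>1$ case split you describe), the space bound comes from the RLBWT representation plus invariant~(i) bounding the SA samples to $\bigO(\runs)$ pairs, and the time bound is obtained by observing that each of the $\bigO(n)$ iterations performs $\bigO(1)$ operations on structures of size $\bigO(\runs)$, each costing $\bigO(\log\runs)$. If anything, your write-up is more explicit than the paper's own Analysis paragraph, which simply lists the per-operation costs and states the theorem.
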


As a direct result of Theorem \ref{thm1}, we obtain an asymptotically optimal-space construction algorithm for a class of repetition-aware indexes~\cite{belazzougui2015composite} combining a RLBWT with the LZ77 factorization. 
The construction of such indexes requires building the RLBWT, computing the LZ77 factorization of $T$, and building additional structures of $\bigO(z)$ words of space. We observe that with our algorithm all these steps can be easily carried out in $\bigO(\runs+z)$ words of working space, which is asymptotically otimal.


\bibliographystyle{plain}
\bibliography{rle-lz77.bib}

\end{document}